\newtheorem{theorem}{Theorem}[section]
\newtheorem{lemma}[theorem]{Lemma}
\newtheorem{claim}[theorem]{Claim}
\newtheorem{corollary}[theorem]{Corollary}
\theoremstyle{definition}
 \newcommand{\cS}{\ensuremath{\mathcal{S}}}  % classical syllogistic
 \newcommand{\cSd}{\ensuremath{\mathcal{S}^\dagger}} % extended classical syllogistic
 \newcommand{\cN}{\ensuremath{\mathcal{N}}}  % numerical syllogistic
 \newcommand{\cNd}{\ensuremath{\mathcal{N}^\dagger}} % extended numerical syllogistic
 \newcommand{\cT}{\ensuremath{\mathcal{T}}}  % funny language
 \newcommand{\leZ}{\ensuremath{\exists_{\leq 0}}} % At most zero quantifier
 \newcommand{\mZ}{\ensuremath{\exists_{> 0}}}    % More than zero quantifier
 \newcommand{\leO}{\ensuremath{\exists_{\leq 1}}} % At most one quantifier
 \newcommand{\mO}{\ensuremath{\exists_{> 1}}}    % More than one quantifier
 \newcommand{\exO}{\ensuremath{\exists_{= 1}}}   % Excactly one quantifier
 \newcommand{\lei}{\ensuremath{\exists_{\leq i}}} % At most i quantifier
 \newcommand{\mi}{\ensuremath{\exists_{> i}}}    % More than i quantifier
 \newcommand{\bbP}{\ensuremath{\mathbb{P}}}  % Mathblackboard P (powerset)
 \newcommand{\bP}{\ensuremath{\mathbf{P}}}   % Mathbold P (set of unary prediates)
 \newcommand{\cL}{\ensuremath{\mathcal{L}}}  % Caligraphic L (general language)
 \newcommand{\fA}{\ensuremath{\mathfrak{A}}} % Fraktur A (structure)
 \newcommand{\fB}{\ensuremath{\mathfrak{B}}} % Fraktur B (structure)
 \newcommand{\sX}{{\sf X}}                   % Sans-serif X for rule systems
 \newcommand{\set}[1]{\{#1\}}
 \renewcommand{\phi}{\varphi}
 \newcommand{\NLOGSPACE}{\textsc{NLogSpace}}
 \newcommand{\NPTIME}{\textsc{NPTime}}
\begin{document}

 \title{The Syllogistic with Unity}
 % \subtitle{}

 \author{Ian Pratt-Hartmann\\
School of Computer Science\\
University of Manchester\\
Manchester M13 9PL, UK.\\
\tt{ipratt@cs.man.ac.uk}}

\date{}

\maketitle

\begin{abstract}
\noindent
We extend the language of the classical syllogisms with the
sentence-forms ``At most 1 $p$ is a $q$'' and ``More than 1 $p$ is a
$q$''.  We show that the resulting logic does not admit a finite set
of syllogism-like rules whose associated derivation relation is sound
and complete, even when {\em reductio ad absurdum} is allowed.
\end{abstract}
\section{Introduction}
\label{sec:intro} 
By the {\em classical syllogistic},
we understand the set of English sentences of the forms
\begin{equation}
\begin{array}{lll}
\mbox{No $p$ is a $q$} 
&  & 
\mbox{Some $p$ is a $q$}\\
\mbox{Every $p$ is a $q$} & & 
\mbox{Some $p$ is not a $q$},
\end{array}
\label{eq:EnglishS}
\end{equation}
where $p$ and $q$ are common (count) nouns.  By the {\em extended
  classical syllogistic}, we understand the classical syllogistic
together with the set of quasi-English sentences of the forms
\begin{equation}
\begin{array}{lll}
\mbox{Every non-$p$ is a $q$} 
&  & 
\mbox{Some non-$p$ is not a $q$}.
\end{array}
\label{eq:EnglishSdagger}
\end{equation}
It is known that there exists a sound and complete proof system for
the classical syllogistic in the form of a finite set of
syllogism-like proof-rules~\citep{s1:smiley73,s1:corcoran}. Such a
proof system also exists for the extended classical syllogistic;
moreover, in both cases, {\em reductio ad absurdum}---in other words,
the strategy of indirect proof---can be dispensed
with~\citep{s1:p-h+m09}. The satisfiability problem for either of
these languages is easily seen to be \NLOGSPACE-complete, by a routine
reduction to (and from) the problem 2-SAT.

Both the classical syllogistic and its extended variant may be
equivalently reformulated using the numerical quantifiers ``At most 0
\ldots'' and ``More than 0 \ldots''. The forms of the classical
syllogistic thus become, respectively
\begin{equation}
\begin{array}{lll}
\mbox{At most 0 $p$s are $q$s} 
&  & 
\mbox{More than 0 $p$s are $q$s}\\
\mbox{At most 0 $p$s are not $q$s} & & 
\mbox{More than 0 $p$s are not $q$s},
\end{array}
\label{eq:EnglishS0}
\end{equation}
while the additional forms of the extended classical syllogistic
become
\begin{equation}
\begin{array}{lll}
\mbox{At most 0 non-$p$s are not $q$s}
&  &
\mbox{More than 0 non-$p$s are not $q$s}.
\end{array}
\label{eq:EnglishSdagger0}
\end{equation}
These reformulations invite generalization.  By the {\em numerical
  syllogistic}, we understand the set of English sentences of the
forms
\begin{equation}
\begin{array}{lll}
\mbox{At most $i$ $p$s are $q$s \hspace{0.8cm}} 
&  & 
\mbox{More than $i$ $p$s are $q$s}\\
\mbox{At most $i$ $p$s are not $q$s} & & 
\mbox{More than $i$ $p$s are not $q$s},
\end{array}
\label{eq:EnglishSN}
\end{equation}
where $p$ and $q$ are common count nouns and $i$ is a (decimal
representation of a) non-negative integer.  By the {\em extended
  numerical syllogistic}, we understand the numerical syllogistic
together with the set of quasi-English sentences of the forms
\begin{equation}
\begin{array}{lll}
\mbox{At most $i$ non-$p$s are not $q$s} 
&  &
\mbox{More than $i$ non-$p$s are not $q$s}.
\end{array}
\label{eq:EnglishSdaggerN}
\end{equation}
In other words, the classical syllogistic is simply the fragment of
the numerical syllogistic in which all numbers are bounded by 0; and
similarly for the extended variants.  The first systematic
investigation of the numerical syllogistic known to the author is that
of~\citet[Ch.~VIII]{s1:deMorgan47}, though this work was closely
followed by treatments in~\cite{s1:boole1868}~\citep[reprinted
  as][Sec.~IV]{s1:booleReprint52},
and~\cite{s1:jevons1871}~\citep[reprinted as][Part~I,
  Sec.~IV]{s1:jevonsReprint1890}.  For a historical overview of this
episode in logic, see~\cite{s1:gg00}.  De Morgan presented a list of
what he took to be the valid numerical syllogisms; and latter-day
systems may be found in~\citet{s1:hp67} and~\citet{s1:murphree98}.  It
can be shown, however, that there exists {\em no} sound and complete
syllogism-like proof system for the numerical syllogistic, even in the
presence of {\em reductio ad absurdum}; and similarly for the extended
numerical syllogistic~\citep{s1:ph09}.  In addition, the
satisfiability problems for the numerical syllogistic and the extended
numerical syllogistic are both \NPTIME-complete~\citep{s1:ph08}.

Thus, the numerical syllogistic differs from the classical syllogistic
in its proof-theoretic and complexity-theoretic properties.  The
purpose of the present paper is to locate the source of this
difference more precisely. Specifically, we consider the {\em
  syllogistic with unity}, which we take to consist of the classical
syllogistic together with the forms
\begin{equation}
\begin{array}{lll}
\mbox{At most 1 $p$ is a $q$} 
&  & 
\mbox{More than 1 $p$ is a $q$}\\
\mbox{At most 1 $p$ is not a $q$} & & 
\mbox{More than 1 $p$ is not a $q$},
\end{array}
\label{eq:EnglishS1}
\end{equation}
along with its extended variant, which additionally features the forms
\begin{equation}
\begin{array}{lll}
\mbox{At most 1 non-$p$ is not a $q$} 
&  & 
\mbox{More than 1 non-$p$ is not a $q$}.
\end{array}
\label{eq:EnglishSdagger1}
\end{equation}
In other words, the syllogistic with unity is simply the fragment of
the numerical syllogistic in which all numbers are bounded by 1; and
similarly for the extended variants.  The syllogistic with unity gives
rise to argument patterns having no counterparts in the classical
syllogistic. For example,
\begin{equation}
\begin{array}{l}
\mbox{At most 1 $o$ is a $p$}\\
\mbox{At most 1 $o$ is not a $p$}\\
\mbox{At most 1 $q$ is not an $o$}\\
\mbox{\underline{More than 1 $q$ is not an $r$}}\\
\mbox{At most 1 $q$ is an $r$}
\end{array}
\label{eq:argE}
\end{equation}
is evidently a valid argument. For the first two premises ensure that
there are at most two $o$s, whence the third premise ensures that
there are at most three $q$s; but the fourth premise states that at
least two of these are not $r$s.

Syntactically speaking, the syllogistic with unity lies closer to the
classical syllogistic than it does to the numerical syllogistic,
because, like the former, but unlike the latter, it features only
finitely many logical forms.  This fact notwithstanding, we show in
the sequel that there exists no sound and complete syllogism-like
proof system for the syllogistic with unity, even in the presence of
{\em reductio ad absurdum}; and similarly for its extended
variant. We also observe that the satisfiability problem for
either of these languages remains \NPTIME-complete. Thus, the smallest
conceivable extension of the classical syllogistic by means of
additional counting quantifiers yields the proof-theoretic and
complexity-theoretic properties of the entire numerical syllogistic.
Generalizing this result, we consider the family of languages obtained
by restricting the numerical syllogistic so that all numbers are
bounded by $z$, where $z$ is any positive integer; and similarly for
the extended numerical syllogistic. We show that, for all these
languages, there exists no sound and complete syllogism-like proof
system, even in the presence of {\em reductio ad absurdum}.

The syllogistic with unity exhibits some similarities with the
intriguing logical system proposed by~\citet[pp.~249--317]{s1:hamilton60}.
According to Hamilton, the predicates of traditional syllogistic
sentence-forms contain implicit existential quantifiers, so that, for
example, ``All $p$ is $q$'' is to be understood as ``All $p$ is some
$q$''. Further, these implicit existential quantifiers can be
meaningfully dualized to yield novel sentence-forms, thus: ``All $p$
is all $q$.''~\citep[A similar language was actually proposed in the
  earlier, but lesser-known][]{s1:bentham27}. Hamilton's account of
the meanings of these sentences is, it must be said, unclear. However,
a natural interpretation is obtained by taking the copula simply to
denote the relation of identity.  Thus, for example, the sentence
``All $p$ is all $q$'' is formalized by $\forall x(p(x) \rightarrow
\forall y(q(y) \rightarrow x = y))$---equivalently, either there are
no $p$s, or there are no $q$s, or there is exactly 1 $p$ and exactly 1
$q$, and they are identical~\citep[see, e.g.][] {s1:fogelin76b}.
Under this interpretation, the pair of Hamiltonian sentences ``All $p$
are all $p$'' and ``Some $p$ are some $p$'' then states that there
exists exactly one $p$---something assertable in the syllogistic with
unity.  In general, however, the two languages are expressively
incomparable; in particular, the Hamiltonian syllogistic provides no
means of stating that exactly one $p$ is a $q$ (with $p$ and $q$
different).  Moreover, they exhibit different proof-theoretic
properties: unlike the syllogistic with unity, Hamilton's language
does indeed have a sound and complete syllogistic proof-system, though
some form of indirect proof is essential~\citep{s1:pratt-hartmann11}.

\section{Syntax and  semantics}
\label{sec:synsem}
Fix a countably infinite set $\bP$.  We refer to any element of $\bP$
as an {\em atom}.  A {\em literal} is an expression of either of the
forms $p$ or $\bar{p}$, where $p$ is an atom.  A literal which is an
atom is called {\em positive}, otherwise, {\em negative}. If $\ell =
\bar{p}$ is a negative literal, then we denote by $\bar{\ell}$ the
positive literal $p$. If $z$ is a non-negative integer, an
$\cS_z$-{\em formula} is any expression of the forms
\begin{equation}
\begin{array}{lll}
  \lei(p,\ell) & \hspace{1cm} &   \mi(p,\ell),
\end{array}
\label{eq:syntaxS1}
\end{equation}
where $p$ is an atom, $\ell$ is a literal and $0 \leq i \leq z$.  An
$\cSd_z$-{\em formula} is any expression of the forms
\begin{equation}
\begin{array}{lll}
  \lei(\ell,m) & \hspace{1cm} &   \mi(\ell,m),
\end{array}
\label{eq:syntaxSd1}
\end{equation}
where $\ell$ and $m$ are literals and $0 \leq i \leq z$.  We denote
the set of $\cS_z$-formulas simply by $\cS_z$, and similarly for
$\cSd_z$. Where the language is clear from context, we speak simply of
{\em formulas}.  Evidently: $\cS_z \subseteq \cSd_z$, $\cS_z \subseteq
\cS_{z+1}$, and $\cSd_z \subseteq \cSd_{z+1}$.  We denote the union of
all the languages $\cS_z$ by $\cN$, and the union of all the languages
$\cSd_z$ by $\cNd$.

A {\em structure} is a pair $\fA = \langle A, \{ p^\fA \}_{p \in \bP}
\rangle$, where $A$ is a non-empty set, and $p^\fA \subseteq A$, for
every $p \in \bP$. The set $A$ is called the {\em domain} of $\fA$. We
extend the map $p \mapsto p^\fA$ to negative literals by setting, for
any atom $p$,
\begin{equation*}
\bar{p}^\fA = A \setminus p^\fA.
\end{equation*}
Intuitively, we may think of the elements of $\bP$ as common
count-nouns, such as ``pacifist'', ``quaker'', ``republican'',
etc., and if $a \in \ell^\fA$, we say that $a$ {\em satisfies} $\ell$
{\em in} $\fA$, and regard $a$ as having the property denoted by
$\ell$. Thus, we may gloss any negative literal $\bar{p}$ as
``non-$p$'' or ``not a $p$'' depending on grammatical context.  If
$\fA$ is a structure, we write $\fA \models \lei(\ell,m)$ if
$|\ell^\fA \cap m^\fA| \leq i$, and $\fA \models \mi(\ell,m)$ if
$|\ell^\fA \cap m^\fA| > i$. If $\fA \models \phi$, we say that $\phi$
is {\em true} in the structure $\fA$.  Thus, we may gloss
$\lei(\ell,m)$ as ``At most $i$ $\ell$s are $m$s'', and $\mi(\ell,m)$ as
```More than $i$ $\ell$s are $m$''.  If $i >0$, we write
$\exists_{=i}(\ell, m)$ as an abbreviation for the pair of formulas
$\set{\exists_{> (i-1)}(\ell,m), \lei(\ell,m)}$.  Where no confusion
results, we occasionally treat this pair as a single formula, which we
may gloss as ``Exactly $i$ $\ell$s are $m$s.''

Evidently, the languages $\cS_0$, $\cS_1$ and $\cN$ formalize the
classical syllogistic, the syllogistic with unity, and the numerical
syllogistic, respectively; similarly, $\cSd_0$, $\cSd_1$ and $\cNd$
formalize their respective extended variants. Observe that the above
semantics render formulas symmetric in their arguments: for example,
$\fA \models \lei(\ell,m)$ if and only if $\fA \models \lei(m,\ell)$,
and similarly for formulas featuring the quantifiers
$\mi$. Accordingly, we shall henceforth regard these arguments as
unordered: that is, we identify the formulas $\lei(\ell,m)$ and
$\lei(m,\ell)$, and similarly for $\mi$. This will help to reduce
notational clutter in some of the proofs.

If $\Theta$ is a set of formulas, we write $\fA \models \Theta$ if,
for all $\theta \in \Theta$, $\fA \models \theta$.  A formula $\theta$
is {\em satisfiable} if there exists a structure $\fA$ such that $\fA
\models \theta$; a set of formulas $\Theta$ is {\em satisfiable} if
there exists $\fA$ such that $\fA \models \Theta$.  We call a formula
of the form $\mi(p,\bar{p})$ an {\em absurdity}, and use $\bot$
to denote, indifferently, any absurdity. Evidently, $\bot$ is
unsatisfiable.  If, for all structures $\fA$, $\fA \models \Theta$
implies $\fA \models \psi$, we say that $\Theta$ {\em entails} $\psi$,
and write $\Theta \models \psi$. In the case where $\Theta =
\set{\theta}$, we say that $\theta$ {\em entails} $\psi$. Thus, for
example, the entailment
\begin{equation}
\set{\leO(o, p), \leO(o, \bar{p}), \leO(q, \bar{o}), \mO(q, \bar{r})} \models
  \leO(q, r)
\label{eq:argS1}.
\end{equation}
formalizes the valid argument~\eqref{eq:argE}.

If $\phi = \lei(\ell,m)$, we write $\bar{\phi}$ to denote
$\mi(\ell,m)$; and if $\phi = \mi(\ell,m)$, we write $\bar{\phi}$ to
denote $\lei(\ell,m)$. Thus, $\bar{\bar{\phi}} = \phi$, and, in any
structure $\fA$, $\fA \models \phi$ if and only if $\fA \not \models
\bar{\phi}$. Informally, we may regard $\bar{\phi}$ as the {\em
  negation} of $\phi$. It will sometimes be convenient to restrict
attention to formulas featuring only a limited selection of atoms.  If
$\bP' \subseteq \bP$, and $\cL$ is any of the languages
$\cS_z$ or $\cSd_z$, we denote the set of $\cL$-formulas $\phi$
involving only atoms in $\bP'$ by $\cL(\bP')$. Since $\phi \in
\cL(\bP')$ evidently implies $\bar{\phi} \in \cL(\bP')$, we may regard
all these languages as {\em closed under negation}. We call a subset
$\Phi \subseteq \cL(\bP')$ {\em complete for} $\cL(\bP')$ if, for
every $\phi \in \cL(\bP')$, either $\phi \in \Phi$ or $\bar{\phi} \in
\Phi$; reference to $\cL(\bP')$ is suppressed if clear from context.

Complete sets of formulas will play an important role in the sequel,
and we employ the following abbreviations to help define them.  Where
the language ($\cS_z$ or $\cSd_z$) is clear from context, and $0 \leq
i \leq z$, we write $\exists^*_{\leq i}(\ell, m)$ for the set of
formulas
\begin{equation*}
\set{\exists_{\leq i}(\ell, m), \dots, \exists_{\leq z}(\ell, m)},
\end{equation*}
and $\exists^*_{> i}(\ell, m)$ for the set of formulas
\begin{equation*}
\set{\exists_{> 0}(\ell, m), \dots, \exists_{> i}(\ell, m)}.
\end{equation*}
In addition, for $0 < i \leq z$, we write $\exists^*_{= i}(\ell, m)$
for the set of formulas
\begin{equation*}
\set{\exists_{> 0}(\ell, m), \dots, \exists_{> i-1}(\ell, m), 
\exists_{\leq i}(\ell, m), \dots, \exists_{\leq z}(\ell, m)}.
\end{equation*}
(Thus, in the languages $\cS_1$ and $\cSd_1$, $\exists^*_{=
  1}(\ell, m)$ and $\exists_{= 1}(\ell, m)$ coincide.)  It is easy
to see that, for any literals $\ell$, $m$, any structure $\fA$, and
any $i$ ($0 \leq i \leq z$), $\fA \models \exists^*_{\leq i}(\ell,m)$
if and only if $\fA \models \exists_{\leq i}(\ell,m)$; similarly, $\fA
\models \exists^*_{> i}(\ell,m)$ if and only if $\fA \models
\exists_{> i}(\ell,m)$. In addition, for $0 < i \leq z$, $\fA \models
\exists^*_{= i}(\ell,m)$ if and only if $\fA \models \exists_{=
  i}(\ell,m)$; moreover, exactly one of $\fA \models \exists^*_{\leq
  (i-1)}(\ell,m)$, $\fA \models \exists^*_{= i}(\ell,m)$ or $\fA
\models \exists^*_{> i}(\ell,m)$ holds.

As mentioned above, the satisfiability problems for $\cS_0$ and
$\cSd_0$---i.e.~the classical syllogistic and the extended classical
syllogistic---are both $\NLOGSPACE$-complete. We end this section with
a contrasting result on the complexity of satisfiability for $\cS_z$
and $\cSd_z$, where $z > 0$.
\begin{theorem}
For all $z > 0$, the problem of determining the satisfiability of a
given set of $\cS_z$-formulas is \NPTIME-complete, and similarly for
$\cSd_z$-formulas.
\label{theo:NP}
\end{theorem}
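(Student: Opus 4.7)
The proof splits into a membership argument and a hardness argument. Since $\cS_z \subseteq \cSd_z$ and $\cS_z \subseteq \cS_{z+1}$ (and similarly for $\cSd$), it suffices to establish membership for $\cSd_z$ at arbitrary $z$, and hardness for $\cS_1$.

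For the upper bound, membership in $\NPTIME$ is immediate from the inclusion $\cSd_z \subseteq \cNd$ together with the $\NPTIME$-completeness of $\cNd$-satisfiability cited in the introduction. A self-contained argument is also available via a polynomial small-model property: given a satisfiable $\Phi \subseteq \cSd_z$, let $\bP'$ be the atoms occurring in $\Phi$, and partition any model $\fA \models \Phi$ into atomic types over $\bP'$, so that elements of the same type are interchangeable with respect to every $\cSd_z(\bP')$-formula. Each constraint $\mi(\ell,m) \in \Phi$ is satisfied as soon as $i + 1 \leq z + 1$ witnesses of appropriate type are present, while every $\lei$ constraint is preserved under discarding elements of any type. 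Retaining $O((z+1)|\Phi|)$ representatives across all atomic types therefore yields a substructure of polynomial size still modelling $\Phi$, after which the standard guess-and-verify strategy places satisfiability in $\NPTIME$.

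For the lower bound, the plan is to reduce from a canonical $\NPTIME$-complete problem, such as 3-SAT (or 3-DIMENSIONAL MATCHING), to the satisfiability problem for $\cS_1$. For each variable of the source instance, one introduces an atom whose extension encodes the variable's Boolean value; for each clause, one introduces auxiliary atoms together with $\exists_{\leq 0}$, $\exists_{\leq 1}$ and $\exists_{> 1}$ constraints that force a witness structure aligned with clause satisfaction. Correctness then amounts to exhibiting a correspondence between satisfying assignments of the source instance and models of the constructed set of $\cS_1$-formulas.

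The principal obstacle is that $\cS_z$ is purely conjunctive, and each of its formulas constrains the cardinality of only a \emph{two-way} intersection of literals. Consequently the $k$-ary disjunctive structure underlying $\NPTIME$-complete problems---such as ``one of three literals is true'' in a 3-SAT clause---cannot be expressed directly and must be routed through polynomially many auxiliary atoms. The natural tool, foreshadowed by the example argument~\eqref{eq:argE}, is that the total size of a disjoint union of cardinality-bounded sets can be pinned down by pairwise $\exists_{\leq 0}$ constraints together with individual $\lei$ bounds, so that a ``more than 1'' condition on the union can trigger the existence of a witness within a specific component. Designing the auxiliary atoms and constraints so as to realise such $k$-ary disjunctions faithfully, and then verifying soundness and completeness of the encoding by a routine but intricate case analysis, is where the real work lies.
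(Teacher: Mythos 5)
Your membership argument is essentially the paper's: a small-model property obtained by retaining $i+1$ witnesses for each $\mi(\ell,m)\in\Phi$ (noting that $\lei$-constraints survive passage to substructures), giving a model of size at most $(z+1)|\Phi|$ and hence a guess-and-verify \NPTIME{} procedure. That half is fine (modulo the trivial point that the domain must stay non-empty).

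The hardness half, however, is a genuine gap: you state a plan (``reduce from 3-SAT\ldots for each clause, one introduces auxiliary atoms\ldots'') and then explicitly defer the construction and its verification (``where the real work lies''). Since hardness is the only non-routine content of the theorem, this leaves the proof incomplete. Moreover, 3-SAT is probably the wrong source problem: as you yourself observe, $\cS_1$ is purely conjunctive and each formula bounds only a binary intersection, so simulating ternary disjunction directly is the hard part, and nothing in your sketch shows how to do it. The paper instead reduces from graph 3-colourability, which is natively a cardinality-bounding problem: by the cited Lemma~1 of \citet{s1:ph08}, satisfiability is already \NPTIME-hard for the language $\cT$ consisting of $\cS_1$ plus formulas $\exists_{\leq 3}(p,p)$, so all that remains is to express ``at most 3 elements are $p$'' inside $\cS_1$. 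This is done by the gadget $\set{\leO(p,\bar{o}),\ \leO(o,o'),\ \leO(o,\bar{o}')}$ with fresh atoms $o,o'$: the last two formulas force $|o^{\fA}|\leq 2$, and the first then forces $|p^{\fA}|\leq |o^{\fA}|+1\leq 3$. This is precisely the ``pin down the size of a union of cardinality-bounded pieces'' device you gesture at in your final paragraph (and which underlies the example argument~\eqref{eq:argE}), but you never actually deploy it, and deploying it against 3-SAT rather than against a counting problem such as colourability would still leave you needing a separate mechanism for disjunction.
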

\begin{proof}
Let $\Phi$ be any satisfiable set of $\cSd_z$-formulas; we claim that
$\Phi$ has a model over a domain of size at most $(z+1)|\Phi|$.  Indeed,
suppose $\fA \models \Phi$.  For any formula $\mi(\ell,m)$, select
$(i+1)$ elements satisfying $\ell$ and $m$. Let $B$ be the set of
selected elements, and let $\fB$ be the restriction of $\fA$ to
$B$. It is obvious that $\fB \models \Phi$, and $|B| \leq
(z+1)|\Phi|$, proving the claim.  Membership of the satisfiability
problem for $\cSd_z$ in \NPTIME{} follows. 

It remains to show \NPTIME-hardness of the satisfiability problem for
$\cS_1$.  Let $\cT$ be the language consisting of $\cS_1$ together
with formulas of the forms $\exists_{\leq 3}(p,p)$, where $p$ is an
atom. It is shown in~\citet[Lemma~1]{s1:ph08} that the satisfiability
problem for $\cT$ is \NPTIME-hard, using a straightforward reduction
of graph-3-colourability. We need only reduce the satisfiability
problem for $\cT$ to that for $\cS_1$.  Let $\Phi$ be any set of
$\cT$-formulas, then. For any formula $\phi = \exists_{\leq 3}(p,p)$,
let $o$, $o'$ be new atoms, and replace $\phi$ by the set of
$\cS_1$-formulas $\set{\leO(p,\bar{o}), \quad \leO(o,o'), \quad
  \leO(o,\bar{o}')}$.  Let the resulting set of $\cS_1$-formulas be
$\Psi$.  Evidently, $\Psi$ entails every formula of $\Phi$;
conversely, any structure $\fA$ such that $\fA \models \Phi$ can
easily be expanded to a structure $\fA'$ such that $\fA \models
\Psi$. This completes the reduction.
\end{proof}
We remark that, when considering the satisfiability problem for
$\cSd_z$, the integer $z$ is a constant: thus, there is only a fixed
number of quantifiers $\exists_{\leq i}$ or $\exists_{> i}$, so that
we do not need to worry about the coding scheme (unary or binary) for
the numerical subscripts.  By contrast, for the language $\cNd$, which
features all these quantifiers, the coding of numerical subscripts is
a significant issue.  That the satisfiability problem for $\cNd$
remains in $\NPTIME$---even when numerical subscripts are coded as
bit-strings---requires a clever combinatorial argument due
to~\citet{s1:eisenbrand+shmonin06}; for details, see~\citet{s1:ph08}.

\section{Syllogistic proof systems}
\label{sec:proof}
Let $\cL$ be any of the languages $\cS_z$ or $\cSd_z$ ($z \geq 0$).
A {\em syllogistic rule} in $\cL$ is a pair $\Theta/\theta$, where
$\Theta$ is a finite set (possibly empty) of $\cL$-formulas, and
$\theta$ an $\cL$-formula.  We call $\Theta$ the {\em antecedents} of
the rule, and $\theta$ its {\em consequent}.  We generally display
rules in `natural-deduction' style. For example,
\begin{equation}
\begin{array}{ll}
\infer{\mZ(p,o)}
       {\leZ(q,\bar{o}) & 
             \qquad \mZ(p,q)}    
\hspace{0.5cm} & 
\infer[,]{\mZ(p,\bar{o})}
                   {\leZ(q, o) & 
             \qquad \mZ(p,q)}    
\end{array}
\label{eq:dariiFerio}
\end{equation}
where $p$, $q$ and $o$ are atoms, are syllogistic rules in $\cS_0$
(hence in all larger languages); they correspond to the traditional
syllogisms {\em Darii} and {\em Ferio}, respectively:
 \begin{center}
 \begin{minipage}{5cm}
 \begin{tabbing}
 Every $q$ is an $o$\\
 \underline{Some $p$ is a $q$}\\
 Some $p$ is an $o$
 \end{tabbing}
 \end{minipage}
 \hspace{0.5cm}
 \begin{minipage}{5cm}
 \begin{tabbing}
 No $q$ is an $o$\\
 \underline{Some $p$ is a $q$}\\
 Some $p$ is not an $o$.
 \end{tabbing}
 \end{minipage}
 \end{center}
We call a syllogistic rule {\em sound} if its antecedents entail its
consequent. Thus, the syllogistic rules~\eqref{eq:dariiFerio} are sound. More
generally, for all $i$, $j$, $z$ ($0 \leq i \leq j \leq z$),
\begin{equation*}
\begin{array}{ll}
\infer{\exists_{>(j-i)} (p,o)}
       {\exists_{\leq i}(q,\bar{o}) & 
             \qquad \exists_{>j} (p,q)}    
\hspace{0.5cm} & 
\infer{\exists_{>(j-i)} (p,\bar{o})}
                   {\exists_{\leq i} (q, o) & 
             \qquad \exists_{>j} (p,q)}    
\end{array}
\end{equation*}
are sound syllogistic rules in $\cS_z$ (and again in all larger languages).

Let $\sX$ be a set of syllogistic rules in $\cL$. A {\em substitution}
is a function $g: \bP \rightarrow \bP$; we extend $g$ to
$\cL$-formulas and to sets of $\cL$-formulas in the obvious way.  An
{\em instance} of a syllogistic rule $\Theta/\theta$ is the
syllogistic rule $g(\Theta)/g(\theta)$, where $g$ is a substitution.
Denote by $\bbP(\cL)$ the set of subsets of $\cL$.  We define the {\em
  direct syllogistic derivation relation} $\vdash_\sX$ to be the
smallest relation on $\bbP(\cL) \times \cL$ satisfying:
\begin{enumerate}
\item if $\theta \in \Theta$, then $\Theta \vdash_\sX \theta$;
\item if $\{\theta_1, \ldots, \theta_n\}/\theta$ is a syllogistic rule
  in $\sX$, $g$ a substitution, $\Theta = \Theta_1 \cup \cdots \cup
  \Theta_n$, and $\Theta_i \vdash_\sX g(\theta_i)$ for all $i$ ($1
  \leq i \leq n$), then $\Theta \vdash_\sX g(\theta)$.
\end{enumerate}
Where the language $\cL$ is clear from context, we omit
reference to it; further, we typically contract {\em syllogistic rule}
to {\em rule}.  Instances of the relation $\vdash_{\sX}$ can
always be established by {\em derivations} in the form of finite trees
in the usual way. For instance, the derivation
\begin{equation*}
\infer{\mZ(p,\bar{r})}
        {\leZ(o,r)                 &
         \infer{\mZ(p,o)}
               {\leZ(q,\bar{o}) &
                \mZ(p,q)}}
\end{equation*}
establishes that, for any set of syllogistic rules ${\sX}$
containing the rules~\eqref{eq:dariiFerio}, 
\begin{equation*}
\set{\leZ(o,r), \leZ(q,\bar{o}), \mZ(p,q)} \vdash_{\sX} \mZ(p,\bar{r}).
\end{equation*}
In the sequel, we reason freely about derivations in order to
establish properties of derivation relations. 

The derivation relation $\vdash_\sX$ is said to be {\em sound} if
$\Theta \vdash_\sX \theta$ implies $\Theta \models \theta$, and {\em
  complete} (for $\cL$) if $\Theta \models \theta$ implies $\Theta
\vdash_\sX \theta$. (Of course, this use of the word `complete' is
unrelated to the notion of a {\em complete} set of
$\cL(\bP')$-formulas defined in Sec.~\ref{sec:synsem}.) Evidently, it
is the existence of sound and complete derivation relations that
interest us, because they would yield convenient procedures for
discovering entailments such as~\eqref{eq:argS1}. A set $\Theta$ of
formulas is \emph{inconsistent} ({\em with respect to} $\vdash_\sX$)
if $\Theta \vdash_\sX \bot$ for some absurdity $\bot$; otherwise,
\emph{consistent}. It is obvious that, for any set of rules $\sX$,
$\vdash_\sX$ is sound if and only if every rule in $\sX$ is sound.

In the sequel, we will need to consider a stronger notion of
syllogistic derivation, incorporating a form of indirect reasoning.
Let $\cL$ be one of the languages considered above, and $\sX$ a set of
syllogistic rules in $\cL$. We define the {\em indirect syllogistic
  derivation relation} $\Vdash_\sX$ to be the smallest relation on
$\bbP(\cL) \times \cL$ satisfying:
\begin{enumerate}
\item if $\theta \in \Theta$, then $\Theta \Vdash_\sX \theta$;
\item if $\{\theta_1, \ldots, \theta_n\}/\theta$ is a syllogistic rule
  in $\sX$, $g$ a substitution, $\Theta = \Theta_1 \cup \cdots \cup
  \Theta_n$, and $\Theta_i \Vdash_\sX g(\theta_i)$ for all $i$ ($1
  \leq i \leq n$), then $\Theta \Vdash_\sX g(\theta)$.
\item if
  $\Theta \cup \set{\theta} \Vdash_\sX \bot$, where $\bot$ is any absurdity,
  then $\Theta \Vdash_\sX \bar{\theta}$.
\end{enumerate}
The only difference is the addition of the final clause, which allows
us to derive a formula $\bar{\theta}$ from premises $\Theta$ if we can
derive an absurdity from $\Theta$ together with $\theta$.  Instances
of the indirect derivation relation $\Vdash_{\sX}$ may also be
established by constructing derivations, except that we need a little
more machinery to keep track of premises. This may be done as
follows. Suppose we have a derivation (direct or indirect) showing
that $\Theta \cup \{\theta\} \Vdash_{\sX} \bot$, for some absurdity
$\bot$. Let this derivation be displayed as
\begin{equation*}
\infer*{\hspace{1mm} \bot,} 
       {\theta_1 & \cdots & \theta_n & \theta & \cdots & \theta}
\end{equation*}
where $\theta_1, \ldots, \theta_n$ is a list of formulas of $\Theta$
(not necessarily exhaustive, and with repeats allowed).  Applying
Clause~3 of the definition of $\Vdash_{\sX}$, we have $\Theta
\Vdash_{\sX} \bar{\theta}$, which we take to be established by the
derivation
\begin{equation*}
        \infer[\mbox{\small (RAA)}^1.]{\bar{\theta}} {\hspace{2mm}  \infer*{\bot}
          {\theta_1 &  \cdots &  \theta_n & [\theta]^1
            \cdots & [\theta]^1} \hspace{2mm}}
\end{equation*}
The tag (RAA) stands for {\em reductio ad absurdum}; the square
brackets indicate that the enclosed instances of $\theta$ have been
{\em discharged}, i.e.~no longer count among the premises; and the
numerical indexing is simply to make the derivation history clear.
Note that there is nothing to prevent $\theta$ from occurring among
the $\theta_1, \ldots, \theta_n$; that is to say, we do not have to
discharge all (or indeed any) instances of the premise $\theta$ if we
do not want to. 

The notions of {\em soundness} and {\em completeness} are defined for
indirect derivation relations in exactly the same way as for direct
derivation relations.  Again, it should be obvious that, for any set
of rules ${\sX}$, $\Vdash_\sX$ is sound if and only if every rule in
$\sX$ is sound. It is important to understand that {\em reductio ad
  absurdum} cannot be formulated as a syllogistic rule in the
technical sense defined here; rather, it is part of the
proof-theoretic machinery that converts any set of rules ${\sX}$ into
the derivation relation $\Vdash_{\sX}$. It is shown
in~\citet{s1:p-h+m09} that, for both the classical syllogistic,
$\cS_0$, and its extension $\cSd_0$, there exist finite sets of rules
$\sX$ such that the {\em direct} derivation relation $\vdash_\sX$ is
sound and complete. (That is: {\em reductio ad absurdum} is not
needed.)  However, the same paper considers various extensions of the
classical syllogistic for which there are sound and complete {\em
  indirect} syllogistic derivation relations, but no sound and
complete {\em direct} ones. Thus, it is in general important to
distinguish these two kinds of proof systems. We show below that
neither $\cS_z$ nor $\cSd_z$ has even a sound and complete {\em
  indirect} syllogistic proof system, for all $z >0$.

We end this section with two simple results on syllogistic derivations.
\begin{lemma}
Let $\cL$ be any of the languages $\cS_z$ or $\cSd_z$ \textup{(}$z
\geq 0$\textup{)}, $\sX$ a set of syllogistic rules in $\cL$, and
$\bP'$ a non-empty subset of $\bP$. Let $\theta \in \cL(\bP')$ and
$\Theta \subseteq \cL(\bP')$.  If there is a derivation
\textup{(}direct or indirect\textup{)} of $\theta$ from $\Theta$ using
$\sX$, then there is such a derivation involving only the atoms of
$\bP'$.  Further, if there is a derivation of an absurdity from
$\Theta$, then there is a derivation of an absurdity $\bot$ from
$\Theta$ such that $\bot \in \cL(\bP')$.
\label{lma:substitute}
\end{lemma}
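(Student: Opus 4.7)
The plan is to fix any atom $p_0 \in \bP'$ (possible since $\bP'$ is nonempty) and define the substitution $h\colon \bP \to \bP$ by $h(p)=p$ for $p \in \bP'$ and $h(p)=p_0$ otherwise. Extend $h$ in the obvious way to literals, formulas, sets of formulas, and then to derivations, by relabelling every node. Two elementary observations drive the argument: first, $h$ fixes every formula of $\cL(\bP')$, so in particular $h(\Theta)=\Theta$ and $h(\theta)=\theta$; second, $h$ sends any absurdity $\mZ(p,\bar p)$ to $\mZ(h(p),\overline{h(p)})$, which is again an absurdity, and moreover lies in $\cL(\bP')$ since $h(p) \in \bP'$.

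The main step is then an induction on the structure of an arbitrary derivation $\mathcal{D}$ (direct or indirect) witnessing $\Theta \vdash_\sX \theta$ (resp.\ $\Theta \Vdash_\sX \theta$), showing that $h(\mathcal{D})$ is itself a valid derivation. The base case $\theta \in \Theta$ is immediate, since $h(\theta)=\theta \in \Theta$. If a node is produced by applying a rule $\{\theta_1,\ldots,\theta_n\}/\theta'$ of $\sX$ under some substitution $g$, then the corresponding node of $h(\mathcal{D})$ is the instance of the same rule under the composite substitution $h \circ g$; this is again a legitimate instance by the definition of instance. For an (RAA) step in an indirect derivation, discharging occurrences of a hypothesis $\eta$ in a subderivation of an absurdity $\bot$, the transformed node discharges exactly the corresponding occurrences of $h(\eta)$ in a subderivation of $h(\bot)$, which is still an absurdity; the conclusion of the transformed step is $\overline{h(\eta)}=h(\bar\eta)$, as required.

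Since $h$ is the identity on $\cL(\bP')$, the root of $h(\mathcal{D})$ still derives $\theta$ from $\Theta$; and every node of $h(\mathcal{D})$ mentions only atoms in $\bP'$, because $h$ has range contained in $\bP'$. This proves the first claim. For the addendum, if $\mathcal{D}$ derives an absurdity $\bot = \mZ(p,\bar p)$ from $\Theta \subseteq \cL(\bP')$, then $h(\mathcal{D})$ derives $h(\bot) = \mZ(h(p),\overline{h(p)})$ from $\Theta$, and $h(\bot) \in \cL(\bP')$ as noted above.

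I do not expect any real obstacle: the only point that requires a little care is checking that the (RAA) bookkeeping, specifically the matching of discharged hypothesis occurrences with the conclusion $\bar\eta$, is preserved under $h$; but this is routine once one observes that $h$ commutes with formula negation, that is, $h(\bar\phi) = \overline{h(\phi)}$ for every $\phi$.
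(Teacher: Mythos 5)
Your proposal is correct and is essentially the paper's own argument: the paper's proof is the one-line instruction ``uniformly replace any atom not in $\bP'$ with one that is,'' which is precisely your substitution $h$, and your induction over the derivation (including the check that rule instances are closed under composing substitutions and that $h$ commutes with negation, so (RAA) steps survive) is just the routine verification the paper leaves implicit. Nothing further is needed.
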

\begin{proof}
Given a derivation of $\theta$ from $\Theta$, uniformly replace any
atom not in $\bP'$ with one that is. Similarly for the second
statement.
\end{proof}
\begin{lemma}
Let $\cL$ be any of the languages $\cS_z$ or $\cSd_z$ \textup{(}$z
\geq 0$\textup{)}, and $\sX$ a set of syllogistic rules in $\cL$.  Let
$\bP' \subseteq \bP$ be non-empty, and $\Psi$ a complete set of
$\cL(\bP')$-formulas.  If $\Psi \Vdash_\sX \bot$, then $\Psi
\vdash_\sX \bot$.
\label{lma:completeRAA}
\end{lemma}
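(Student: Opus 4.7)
The plan is to prove a slight strengthening of the lemma by induction on the structure of the indirect derivation. Specifically, I would show that if $\Psi$ is complete for $\cL(\bP')$ and $\Psi \Vdash_\sX \phi$, then either $\Psi \vdash_\sX \phi$ or $\Psi \vdash_\sX \bot$. The lemma itself then follows by taking $\phi$ to be $\bot$.

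First I would dispatch the two easy cases. If the indirect derivation ends with the axiom clause (Clause~1), then $\phi \in \Psi$, so $\Psi \vdash_\sX \phi$ directly. If it ends with a rule application (Clause~2), say $\phi = g(\theta)$ for a rule $\{\theta_1,\ldots,\theta_n\}/\theta \in \sX$ with subderivations $\Psi_i \Vdash_\sX g(\theta_i)$, then each $\Psi_i \subseteq \Psi$, so I may regard each subderivation as an indirect derivation from $\Psi$. Applying the inductive hypothesis to each of these, either $\Psi \vdash_\sX \bot$ for some $i$ (in which case we are done), or $\Psi \vdash_\sX g(\theta_i)$ for every $i$, in which case the same rule application in the direct calculus yields $\Psi \vdash_\sX g(\theta) = \phi$.

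The interesting case, and the one where completeness of $\Psi$ is used, is the RAA clause (Clause~3): here $\phi = \bar\theta$ and we have an indirect subderivation establishing $\Psi \cup \{\theta\} \Vdash_\sX \bot$. Completeness of $\Psi$ gives two subcases. If $\bar\theta \in \Psi$, then $\phi \in \Psi$ and $\Psi \vdash_\sX \phi$ by the axiom clause. If instead $\theta \in \Psi$, then $\Psi \cup \{\theta\} = \Psi$, so the subderivation witnesses $\Psi \Vdash_\sX \bot$; applying the inductive hypothesis to this subderivation yields $\Psi \vdash_\sX \bot$, which discharges the right disjunct of the strengthened conclusion.

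The main obstacle, such as it is, lies in choosing the right strengthening so that the induction goes through: a naive attempt to prove only ``$\Psi \Vdash_\sX \phi$ implies $\Psi \vdash_\sX \phi$'' stalls in the RAA case when $\theta \in \Psi$, because we then recover $\Psi \vdash_\sX \bot$ rather than $\Psi \vdash_\sX \bar\theta$, and the direct calculus need not contain an ex falso rule. Admitting $\Psi \vdash_\sX \bot$ as an acceptable alternative conclusion repairs this gap and is precisely what the lemma requires.
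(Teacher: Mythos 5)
Your overall strategy is sound and is, at bottom, the same as the paper's: both arguments hinge on using the completeness of $\Psi$ to dispose of an application of (RAA) that discharges $\theta$, either by replacing the derived $\bar{\theta}$ with the trivial derivation (when $\bar{\theta} \in \Psi$) or by absorbing the discharged premise into $\Psi$ (when $\theta \in \Psi$). The packaging differs: the paper picks a derivation minimising the number of (RAA) applications and eliminates the last one, whereas you run a structural induction with the strengthened statement ``$\Psi \vdash_\sX \phi$ or $\Psi \vdash_\sX \bot$''. Your strengthening is a clean way of coping with the absence of an \emph{ex falso} rule in the direct calculus, and either organisation works.

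There is, however, one genuine gap. In the (RAA) case you assert that completeness of $\Psi$ forces $\theta \in \Psi$ or $\bar{\theta} \in \Psi$. But $\Psi$ is complete only \emph{for} $\cL(\bP')$: the dichotomy applies only to formulas whose atoms lie in $\bP'$. Nothing in the definition of $\Vdash_\sX$ prevents an intermediate (RAA) step from discharging a formula $\theta$ containing an atom outside $\bP'$ --- for instance, one can always discharge a fresh absurdity $\exists_{>0}(r,\bar{r})$ with $r \notin \bP'$ to derive $\exists_{\leq 0}(r,\bar{r})$ --- and for such $\theta$ neither $\theta$ nor $\bar{\theta}$ need belong to $\Psi$, so your case split is not exhaustive. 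The repair is exactly Lemma~\ref{lma:substitute}: before starting the induction, uniformly rename every atom not in $\bP'$ to one that is, so that every formula occurring in the derivation (in particular every discharged formula, and the final absurdity itself) lies in $\cL(\bP')$. With that preliminary normalisation your induction goes through. A second, minor point: in the rule-application case you silently pass from a subderivation witnessing $\Psi_i \Vdash_\sX g(\theta_i)$ with $\Psi_i \subseteq \Psi$ to one witnessing $\Psi \Vdash_\sX g(\theta_i)$; this weakening is harmless but should be made explicit, e.g.\ by stating the induction hypothesis for arbitrary premise sets $\Theta \subseteq \Psi$.
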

\begin{proof}
Suppose that there is an indirect derivation of some absurdity $\bot$
from $\Psi$, using the rules $\sX$.  By Lemma~\ref{lma:substitute}, we
may assume all formulas involved are in $\cL(\bP')$.  Let the number
of applications of (RAA) employed in this derivation be $k$; and
assume without loss of generality that $\bot$ is chosen so that this
number $k$ is minimal. If $k > 0$, consider the last application of
(RAA) in this derivation, which derives a formula, say, $\bar{\psi}$,
discharging a premise $\psi$. Then there is an (indirect) derivation
of some absurdity $\bot'$ from $\Psi \cup \{\psi\}$, employing fewer
than $k$ applications of (RAA).  By minimality of $k$, $\psi \not \in
\Psi$, and so, by the completeness of $\Psi$, $\bar{\psi} \in
\Psi$. But then we can replace our original derivation of $\bar{\psi}$
with the trivial derivation, so obtaining a derivation of $\bot$ from
$\Psi$ with fewer than $k$ applications of (RAA), a
contradiction. Therefore, $k = 0$, or, in other words, $\Psi
\vdash_\sX \bot$.
\end{proof}

\section{No sound and complete syllogistic systems for $\cS_z$ or $\cSd_z$}
\label{sec:main}
In this section, we prove that none of the langauges $\cS_z$ or
$\cSd_z$ ($z >0$) has a sound and complete indirect syllogistic proof
system. The strategy we adopt is identical to that employed
in~\cite{s1:ph09} to obtain analogous results for the langauges
$\cN$ and $\cNd$.  However, the specific construction required to cope
with the restriction to $\cS_z$ and $\cSd_z$ is new, and more
involved.

To reduce clutter in the proof, we begin with the most
interesting case: $z = 1$.
\begin{theorem} There is no finite set $\sX$ of syllogistic rules in either
$\cS_1$ or $\cSd_1$ such that $\Vdash_{\mbox{\sX}}$ is sound and
  complete.
\label{theo:noSyll}
\end{theorem}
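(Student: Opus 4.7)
\emph{Plan.} The plan follows the template of~\cite{s1:ph09}. Assume for contradiction that $\sX$ is a finite set of syllogistic rules in $\cS_1$ such that $\Vdash_\sX$ is sound and complete, and let $K$ bound both the number of atoms and the number of antecedents in any single rule of $\sX$. The aim is to exhibit, for each sufficiently large $n$, an atom-set $\bP_n$ of size $n$ together with a set $\Psi_n \subseteq \cS_1(\bP_n)$ satisfying: (i) $\Psi_n$ is unsatisfiable; (ii) $\Psi_n$ is complete for $\cS_1(\bP_n)$; and (iii) for every proper subset $\bP' \subsetneq \bP_n$, the restriction $\Psi_n \cap \cS_1(\bP')$ is satisfiable. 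From (i) and the assumed completeness of $\Vdash_\sX$ one has $\Psi_n \Vdash_\sX \bot$; (ii) together with Lemma~\ref{lma:completeRAA} converts this to $\Psi_n \vdash_\sX \bot$; and Lemma~\ref{lma:substitute} allows us to take the atoms appearing in this direct derivation to lie entirely in $\bP_n$. The concluding step is to squeeze this derivation, by a substitution exploiting the bound $K$, into one whose premises lie in $\Psi_n \cap \cS_1(\bP')$ for some $\bP' \subsetneq \bP_n$, contradicting (iii) together with the soundness of $\vdash_\sX$.

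\emph{Construction of $\Phi_n$.} The creative step is the design of $\Psi_n$, obtained by completing a carefully chosen unsatisfiable core $\Phi_n$. Guided by~\eqref{eq:argE}, I would use ``$\leO$-chains'' to express numerical bounds larger than $1$: the pair $\leO(o,p), \leO(o,\bar{p})$ forces $|o^\fA|\leq 2$, and an assertion $\leO(q,\bar{o})$ then propagates $|q^\fA|\leq 3$; iterating along fresh atoms yields progressively stronger upper bounds. Dually, $\mO$-formulas together with disjointness assertions $\leZ(\cdot,\cdot)$ build up matching lower bounds. The plan is to interleave one upper-bound chain with one lower-bound chain, arranged so that their joint contradiction involves all $n$ atoms at once, yet such that removing any single atom severs one of the chains and restores satisfiability of the remaining formulas. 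Because $\cS_1$ admits only the quantifiers $\leO$ and $\mO$, every nontrivial numerical bound must be assembled incrementally through such chains, so the precise interleaving requires more care than in the corresponding constructions for $\cN$ and $\cNd$ in~\cite{s1:ph09}.

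\emph{Completion.} The completion step is routine: enumerate $\cS_1(\bP_n)$ and, at each stage, extend the current set by whichever of $\phi$ or $\bar{\phi}$ preserves the property that $\Psi_n \cap \cS_1(\bP')$ is satisfiable for every $\bP' \subsetneq \bP_n$. A finite compactness check---using the fact that $\cS_1(\bP_n)$ is finite and that each proper restriction has a finite witnessing structure---shows that at least one of $\phi,\bar{\phi}$ always works, so the greedy procedure terminates with a $\Psi_n$ that is complete for $\cS_1(\bP_n)$ and still satisfies (iii). With $\Psi_n$ in hand, the machinery of the first paragraph produces a direct derivation $D$ of some absurdity from $\Psi_n$ using only atoms of $\bP_n$.

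\emph{Localization and main obstacle.} The final step is to localize $D$. Each rule application in $D$ mentions at most $K$ atoms, but different applications may mention different atoms, so $D$ as a whole can in principle touch all of $\bP_n$. I plan to argue that once $n$ is sufficiently large relative to $K$ there exists a substitution $g:\bP_n\to\bP_n$ whose image is a proper subset $\bP'\subsetneq\bP_n$ and for which the image derivation $g(D)$ remains a valid direct derivation with leaves in $\Psi_n$; completeness of $\Psi_n$ together with the structural symmetries built into the chains of $\Phi_n$ should ensure that such a $g$ can be found, perhaps after locally repairing leaves $g(\phi)$ whose negation lies in $\Psi_n$. This matching of the $K$-bound to the combinatorial shape of $\Phi_n$ is the main obstacle of the proof. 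Finally, the case of $\cSd_1$ requires only that the greedy completion be performed over the larger formula space $\cSd_1(\bP_n)$: the witnessing family $\Phi_n$ already lies in $\cS_1 \subseteq \cSd_1$, and the localization argument carries over with cosmetic changes.
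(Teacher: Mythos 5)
Your high-level template is the right one---an unsatisfiable, absurdity-free, complete set of formulas over a large atom set, combined with Lemma~\ref{lma:substitute} and Lemma~\ref{lma:completeRAA} to reduce everything to a direct derivation of an absurdity, which is then shown impossible by exploiting the finite bound on rule size---and your ``$\leO$-chain'' idea for the unsatisfiable core is in the spirit of the paper's construction (its $\Gamma^n$ is built from chains of $\exO$ and $\leZ$/$\leO$ formulas over atoms $p_i, q_j$). But the concluding ``localization'' step, which you yourself flag as the main obstacle, is a genuine gap, and not one that can be patched as described. Applying a non-injective substitution $g$ to a whole derivation $D$ does preserve rule-instancehood, but it does not preserve the property that the leaves lie in $\Psi_n$: if $\phi,\phi'\in\Psi_n$ are leaves with $g(\phi)=\bar{g(\phi')}$, at most one of $g(\phi),g(\phi')$ can belong to the complete set $\Psi_n$, and since $\Psi_n\cap\cS_1(\bP')$ is itself complete for $\cS_1(\bP')$ and satisfiable, a ``bad'' leaf $g(\phi)$ with $\bar{g(\phi)}\in\Psi_n$ is refuted by, not derivable from, $\Psi_n\cap\cS_1(\bP')$ in any sound system; no local repair can recover it. Relatedly, your invariant (iii) (satisfiability of every proper atom-restriction) is not the condition that makes the argument go through.

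What the paper does instead is to abandon atom-restrictions in favour of a family of $n-2$ complete \emph{satisfiable} variants $\Gamma^n_t$ of the complete unsatisfiable set $\Gamma^n$, each obtained by flipping a fixed triple of formulas, with explicit models $\fB^n_t$ and with the property that pairwise intersections of distinct variants lie inside $\Gamma^n$. The decisive step (Claim~\ref{claim:stable}) is then an induction on the length of a direct derivation, applied to \emph{one rule application at a time}: the last rule instance has at most $r\leq n-4$ antecedents, all of which lie in $\Gamma^n$ by the inductive hypothesis; since the $n-2$ flipped triples are pairwise disjoint, the antecedent set misses at least two of them and is therefore contained in two distinct satisfiable variants $\Gamma^n_h$; soundness of that single rule plus completeness of each $\Gamma^n_h$ places the conclusion in both variants, hence in $\Gamma^n$. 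Thus $\Gamma^n$ is closed under $\vdash_\sX$ and contains no absurdity, which is the contradiction. This local, per-rule induction over a family of complete satisfiable perturbations is the idea your proposal is missing; a global substitution acting on the entire derivation cannot do its job.
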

\begin{proof}
We first prove the result for $\cSd_1$; the result for $\cS_1$ will
then follow by an easy adaptation.  Henceforth, then, let $\sX$ be a
finite set of sound syllogistic rules in $\cSd_1$. We show that
$\Vdash_{\mbox{\sX}}$ is not complete.

Let $n \geq 4$, and let $\bP^n$ be a subset of $\bP$ of cardinality
$4n+2$: we shall write $\bP^n = \set{p_0, \dots, p_{2n-1}$, $q_0,
  \dots, q_{2n+1}}$. Now let $\Gamma^n$ be the set of
$\cSd_1(\bP^n)$-formulas given in~\eqref{pp++1}--\eqref{pq--1}, which,
for perspicuity, we divide into groups.  (Recall that the arguments of
formulas are taken to be {\em unordered}.) The first group concerns the
literals $p_i$ only:
\begin{align}
\label{pp++1} 
& \exO(p_i, p_{i+1}) & & (0 \leq i \leq 2n-2)\\
\label{pp++2} 
& \exO(p_i, p_{i+3}) & & (i \mbox{ even}) \wedge (0 \leq i \leq 2n-4)\\
\label{pp++3} 
& \leZ^*(p_{0}, p_{2n-1}) \\
\label{pp++4} 
& \mO^*(p_i, p_j) & & (0 \leq i \leq j \leq 2n-1) \wedge 
        (j \neq i+1) \wedge\\
\nonumber & & &
      (i \mbox{ odd} \vee j \neq i+3) \wedge
      (i \neq 0 \vee j \neq 2n-1)\\
\label{pp+-1} 
& \leZ^*(p_i, \bar{p}_i) & & (0 \leq i \leq 2n-1)\\
\label{pp+-2} 
& \mO^*(p_i, \bar{p}_j) & & (0 \leq i \leq 2n-1) \wedge (0 \leq j \leq 2n-1) \wedge
    (i \neq j)\\
\label{pp--1} 
& \mO^*(\bar{p}_i, \bar{p}_j) & & (0 \leq i \leq j \leq 2n-1).
\end{align}
The second group concerns the literals $q_i$ only:
\begin{align}
\label{qq++1} 
& \exO(q_{i}, q_{i+1}) & & (i \mbox{ even}) \wedge (0 \leq i \leq 2n)\\
\label{qq++2} 
& \mO^*(q_i, q_j) & & (0 \leq i \leq j \leq 2n+1) \wedge\\
\nonumber & & &
      (i \mbox{ odd} \vee j \neq i+1)\\
\label{qq+-1} 
& \leZ^*(q_i, \bar{q}_i) & & (0 \leq i \leq 2n+1)\\
\label{qq+-2} 
& \mO^*(q_i, \bar{q}_j) & & (0 \leq i \leq 2n+1) \wedge (0 \leq j \leq 2n+1) \wedge
    (i \neq j)\\
\label{qq--1} 
& \mO^*(\bar{q}_i, \bar{q}_j) & & (0 \leq i \leq j \leq 2n+1).
\end{align}
The third group mixes the literals $p_i$ and $q_i$:
\begin{align}
\label{pq++1} 
& \exO(p_{i+1}, q_{i}) & & (i \mbox{ even}) \wedge (0 \leq i \leq 2n-2)\\
\label{pq++2} 
& \exO(p_{i}, q_{i+1}) & & (0 \leq i \leq 2n-1)\\
\label{pq++3} 
& \exO(p_{i}, q_{i+3}) & & (i \mbox{ even} \wedge 0 \leq i \leq 2n-2)\\
\label{pq++4} 
& \mO^*(p_i, q_j) & & (0 \leq i \leq 2n-1) \wedge (0 \leq j \leq 2n+1) \wedge
   (j \neq i+1) \wedge\\
\nonumber & & &
      (i \mbox{ odd} \vee j \neq i+3) \wedge
      (j \mbox{ odd} \vee i \neq j+1)\\
\label{pq+-1} 
& \leZ^*(p_{i}, \bar{q}_{i}) & & (0 \leq i \leq 2n-1)\\
\label{pq+-2} 
& \leZ^*(p_{i}, \bar{q}_{i+2}) & & (0 \leq i \leq 2n-1)\\
\label{pq+-3} 
& \mO^*(p_i, \bar{q}_j) & & (0 \leq i \leq 2n-1) \wedge (0 \leq j \leq 2n+1) \wedge
   (j \neq i) \wedge\\
\nonumber & & &
   (j \neq i+2)\\
\label{pq-+1} 
& \mO^*(\bar{p}_i, q_j) & & (0 \leq i \leq 2n-1) \wedge (0 \leq j \leq 2n+1)\\
\label{pq--1} 
& \mO^*(\bar{p}_i, \bar{q}_j) & & (0 \leq i \leq 2n-1) \wedge (0 \leq j \leq 2n+1)
\end{align}
In fact, the formulas that will be doing most of the work here
are~\eqref{pp++1}, \eqref{pp++3}, \eqref{qq++1}, \eqref{pq+-1} and
\eqref{pq+-2}. The others are required only to ensure that we have a
complete set of formulas.
\begin{claim}
$\Gamma^n$ is a complete set of $\cSd_1(\bP^n)$-formulas. On the other
  hand, $\Gamma^n$ contains no absurdities.
\label{claim:complete}
\end{claim}
\begin{proof}
Consider first any formula of $\cSd_1(\bP^n)$ whose arguments are
$\set{p_i,p_j}$ ($0 \leq i \leq j \leq n-1$). Since the conditions on
$i$ and $j$ in~\eqref{pp++1}--\eqref{pp++4} are clearly exhaustive
(taking account of the fact that the arguments of formulas are
unordered), we see that $\Gamma^n$ contains one of $\leZ(p_i,p_j)$ or
$\mZ(p_i,p_j)$, and one of $\leO(p_i,p_j)$ or $\mO(p_i,p_j)$. The
other argument-patterns are dealt with similarly. The second part of
the lemma is ensured by the condition $(i \neq j)$ in the sets of
formulas~\eqref{pp+-2} and~\eqref{qq+-2}.
\end{proof}
\begin{claim}
$\Gamma^n$ is unsatisfiable.
\label{claim:unsatisfiable}
\end{claim}
\begin{proof}
Suppose $\fA \models \Gamma^n$. From~\eqref{pp++1}, let $a_h$ 
satisfy $p_{2h}$ and $p_{2h+1}$ ($0 \leq h \leq
n-1$). From~\eqref{pq+-1} and~\eqref{pq+-2}, $a_h$ also satisfies
$q_{2h}$, $q_{2h+1}$, $q_{2h+2}$ and $q_{2h+3}$. Hence,
from~\eqref{qq++1}, we have $a_0 = a_1 = \cdots = a_{n-1}$. But then this
common element satisfies $p_0$ and $p_{2n-1}$,
contradicting~\eqref{pp++3}.
\end{proof}

We now proceed to define a collection of satisfiable variants of $\Gamma^n_t$.
For all $t$ ($1 \leq t \leq n-2$), let
\begin{multline*}
\Gamma^n_t = (\Gamma^n \setminus 
              \set{\mZ(p_{2t-1},p_{2t}), 
                   \mZ(p_{2t-2},p_{2t+1}), 
                   \leO(q_{2t},q_{2t+1})}) \cup \\
             \set{\leZ(p_{2t-1},p_{2t}), 
                  \leZ(p_{2t-2},p_{2t+1}), 
                  \mO(q_{2t},q_{2t+1})}. 
\end{multline*}
Since $\Gamma^n$ is complete, so is $\Gamma^n_t$. The difference
between $\Gamma^n$ and $\Gamma^n_t$ that will be doing most of the
work here is that the latter set lacks the formula
$\leO(q_{2t},q_{2t+1})$. This disrupts the argument of
Claim~\ref{claim:unsatisfiable}: the best we can now infer is that
$a_0 = a_1 = \cdots = a_{t-1}$, and $a_t = a_{t+1} = \cdots =
a_{n-1}$, so that there need be no element satisfying both $p_0$ and
$p_{2n-1}$. 
\begin{claim}
If $1 \leq t < t' \leq n-2$, then $\Gamma^n_t \cap \Gamma^n_{t'}
\subseteq \Gamma^n$.
\label{claim:intersect}
\end{claim}
\begin{proof}
Straightforward check.
\end{proof}
To show that $\Gamma^n_t$ is satisfiable, we define a structure
$\fB^n_t$ as follows. The domain of $\fB^n_t$ consists of elements
$a$, $a'$, $b_{i,j}$, $b'_{i,j}$, $c_{i,j}$, $c'_{i,j}$, $d_{i,j}$,
$d'_{i,j}$, $e$ and $e'$, with indices subject to the conditions in
the middle column of Table~\ref{table:model} (interpreted
conjunctively); the atoms satisfied by these elements in $\fB^n_t$ are listed in
the right-most column of Table~\ref{table:model}. Note that the
elements $e$ and $e'$ satisfy no atoms at all. Roughly, the elements
$a$ and $a'$ ensure the truth of formulas in $\Gamma^n_t$ of the form
$\mZ(\ell, m)$ for which the corresponding $\leO(\ell, m)$ is {\em
  also} in $\Gamma^n_t$ (i.e.~a uniqueness claim), while the remaining
elements ensure the truth of formulas in $\Gamma^n_t$ of the form
$\mO(\ell, m)$.  The main task in the proof of
Claim~\ref{claim:satisfies} below is to ensure that these latter
elements do not spoil any uniqueness claims.
\begin{table}
\begin{center}
\begin{tabular}{|l|l|l|}
\hline
Element name & index conditions & atoms satisfied\\
$a$ & \ & $p_0$, \dots, $p_{2t-1}$, $q_0$, \dots, $q_{2t+1}$, \\
\hline
$a'$ & \ & $p_{2t}$, \dots, $p_{2n-1}$, $q_{2t}$, \dots, $q_{2n+1}$, \\
\hline
$b_{i,j}$, $b'_{i,j}$ & 
   $(0 \leq i \leq j \leq 2n-1) $ & 
   $p_i$, $q_i$, $q_{i+2}$, $p_j$, $q_j$, $q_{j+2}$\\
\ &  $(j \neq i+1)$ & \\
\ &  $(i \mbox{ odd} \vee j \neq i+3)$ & \\
\ & $(i \neq 0 \vee j \neq 2n-1) $ & \\
\hline
$c_{i,j}$, $c'_{i,j}$ & 
   $(0 \leq i \leq 2n-1)  $ & 
   $p_i$, $q_i$, $q_{i+2}$, $q_j$\\
\ & $(0 \leq j \leq 2n+1) $ & \\
\ & $(j \neq i+1) $ & \\
\ &  $(i \mbox{ odd} \vee j \neq i+3) $ & \\
\ &  $ (j \mbox{ odd} \vee i \neq j+1)$ & \\
\hline
$d_{i,j}$, $d'_{i,j}$ & 
   $(0 \leq i \leq j \leq 2n+1) $ & 
   $q_i$, $q_j$\\
\ &  $(i \mbox{ odd} \vee j \neq i+1)$ & \\
\hline
$e$, $e'$ & \ &\\
\hline
\end{tabular}
\end{center}
\caption{Definition of the model $\fB^n_t$: elements $b_{i,j}$,
  $b'_{i,j}$, $c_{i,j}$, $c'_{i,j}$, $d_{i,j}$ and $d'_{i,j}$ exist
  only for those pairs of indices $i, j$ satisfying {\em all} the
  indicated index conditions.}
\label{table:model}
\end{table}
\begin{claim}
For all $n \geq 4$ and all $t$ \textup{(}$1 \leq t \leq n-2$\textup{)}, 
$\fB^n_t \models \Gamma^n_t$.
\label{claim:satisfies}
\end{claim}
\begin{proof}
We consider the formulas~\eqref{pp++1}--\eqref{pq--1} in turn, taking
account of the differences between $\Gamma^n$ and $\Gamma^n_t$ as we
encounter them.

\begin{description}
\item[\eqref{pp++1}:] For $0 \leq i \leq 2t-2$, $a$ satisfies the atoms $p_i$
  and $p_{i+1}$, whereas $a'$ does not; for $2t \leq i \leq 2n-2$, $a'$ satisfies
  the atoms $p_i$ and $p_{i+1}$, whereas $a$ does not; for $i = 2t-1$,
  neither $a$ nor $a'$ satisfies (both) these atoms, but then $\Gamma^n_t$
  replaces $\mZ(p_{2t-1}, p_{2t})$ by $\leZ(p_{2t-1}, p_{2t})$. It remains to
  check that no other element satisfies these atoms. The only
  danger is from $b_{i',j'}$ and $b'_{i',j'}$, where $i'= i$ and $j' =
  i+1$; but the condition $j' \neq i' +1$ (middle column of
  Table~\ref{table:model}) rules this combination of  indices out.
\item[\eqref{pp++2}:] Almost identical to the argument for \eqref{pp++1}.
\item[\eqref{pp++3}:] Almost identical to the argument for \eqref{pp++1}.
\item[\eqref{pp++4}:] The elements $b_{i,j}$ and $b_{i',j'}$ both
  satisfy the atoms $p_i$ and $p_j$. Notice that the conditions on the
  indices in~\eqref{pp++4} are matched by the relevant conditions in the middle
  column of Table~\ref{table:model}, so that all formulas are
  accounted for.
\item[\eqref{pp+-1}:] Trivially satisfied.
\item[\eqref{pp+-2}:] If $i \neq j$, then both $c_{i,i}$ and
  $c'_{i,i}$, which exist for all $i$ ($0 \leq i \leq 2n-1$), satisfy
  the literals $p_i$, and $\bar{p}_j$.
\item[\eqref{pp--1}:] Both $e$ and $e'$ satisfy the literals
  $\bar{p}_i$ and $\bar{p}_j$.
\item[\eqref{qq++1}:] For $0 \leq i \leq 2t-2$, $a$ satisfies the
  atoms $q_{i}$ and $q_{i+1}$, whereas $a'$ does not; for $2t+2 \leq i
  \leq 2n$, $a'$ satisfies the atoms $q_{i}$ and $q_{i+1}$, whereas
  $a$ does not; for $i = 2t$, both $a$ and $a'$ satisfy (both) these
  atoms, but then $\Gamma^n_t$ replaces $\leO(q_{2t}, q_{2t+1})$ by
  $\mO(q_{2t}, q_{2t+1})$. It remains to check that no other element
  satisfies these atoms. Considering the right-hand column of
  Table~\ref{table:model}, the only danger is from: ({\em i})
  $b_{i',j'}$ and $b'_{i',j'}$, where $i'= i$ and $j' = i+1$; ({\em
    ii}) $b_{i',j'}$ and $b'_{i',j'}$, where $i'+2=i$ and $j' = i+1$
  (with $i'$ even); ({\em iii}) $b_{i',j'}$ and $b'_{i',j'}$, where
  $i'+2= i$ and $j'+2 = i+1$; ({\em iv}) $b_{i',j'}$ and $b'_{i',j'}$,
  where $i'+2 = i+1$ and $j' = i$; ({\em v}) $c_{i',j'}$ and
  $c'_{i',j'}$, where $i'= i$ and $j' = i+1$; ({\em vi}) $c_{i',j'}$
  and $c'_{i',j'}$, where $i'+2= i$ and $j' = i+1$ (with $i'$ even);
  ({\em vii}) $c_{i',j'}$ and $c'_{i',j'}$, where $j'= i$ and $i' =
  i+1$ (with $j'$ even); ({\em viii}) $c_{i',j'}$ and $c'_{i',j'}$,
  where $i'+2 = i+1$ and $j' = i$; ({\em ix}) $d_{i',j'}$ and
  $d'_{i',j'}$, where $i'= i$ and $j' = i+1$ (with $i'$ even).
  However, the conditions in the middle column of
  Table~\ref{table:model} rule these combinations of indices out.
\item[\eqref{qq++2}:] The elements $d_{i,j}$ and $d_{i',j'}$ both
  satisfy the atoms $q_i$ and $q_j$. Notice that the conditions on the
  indices in~\eqref{qq++2} are matched by the relevant conditions in the middle
  column of Table~\ref{table:model}, so that all formulas are accounted for.
\item[\eqref{qq+-1}:] Trivially satisfied.
\item[\eqref{qq+-2}:] If $i \neq j$, then both $d_{i,i}$ and
  $d'_{i,i}$, which exist for all $i$ ($0 \leq i \leq 2n+1$), satisfy
  the literals $q_i$, and $\bar{q}_j$.
\item[\eqref{qq--1}:] Both $o$ and $o'$ satisfy the literals
  $\bar{q}_i$ and $\bar{q}_j$.
\item[\eqref{pq++1}:] For $0 \leq i \leq 2t-2$, $a$ satisfies the
  atoms $p_{i+1}$ and $q_{i}$, whereas $a'$ does not; for $2t \leq i
  \leq 2n-2$, $a'$ satisfies the atoms $p_{i+1}$ and $q_{i}$, whereas
  $a$ does not. We check that no other element satisfies these
  atoms. Considering the right-hand column of Table~\ref{table:model},
  the only danger is from: ({\em i}) $b_{i',j'}$ and $b'_{i',j'}$,
  where $i'= i$ and $j' = i+1$; ({\em ii}) $b_{i',j'}$ and
  $b'_{i',j'}$, where $i'+2= i$ and $j' = i+1$ (with $i'$ even); ({\em
    iii}) $c_{i',j'}$ and $c'_{i',j'}$, where $i'= i+1$ and $j' = i$ (with
  $j'$ even).
  However, the conditions in the middle column of
  Table~\ref{table:model} rule these combinations of indices out.
\item[\eqref{pq++2}:] For $0 \leq i \leq 2t-1$, $a$ satisfies the
  atoms $p_i$ and $q_{i+1}$, whereas $a'$ does not; for $2t \leq i
  \leq 2n-1$, $a'$ satisfies the atoms $p_i$ and $q_{i+1}$, whereas
  $a$ does not. We check that no other element satisfies these
  atoms. Considering the right-hand column of Table~\ref{table:model},
  the only danger is from: ({\em i}) $b_{i',j'}$ and $b'_{i',j'}$,
  where $i'= i$ and $j' = i+1$; ({\em ii}) $b_{i',j'}$ and
  $b'_{i',j'}$, where $i'+2= i+1$ and $j' = i$; ({\em iii})
  $c_{i',j'}$ and $c'_{i',j'}$, where $i'= i$ and $j' = i+1$.
  However, the conditions in the middle column of
  Table~\ref{table:model} rule these combinations of indices out.
\item[\eqref{pq++3}:] For $0 \leq i \leq 2t-2$, $a$ satisfies the
  atoms $p_{i}$ and $q_{i+3}$, whereas $a'$ does not; for $2t \leq i
  \leq 2n-2$, $a'$ satisfies the atoms $p_{i}$ and $q_{i+3}$, whereas
  $a$ does not. We check that no other element satisfies these
  atoms. Considering the right-hand column of Table~\ref{table:model},
  the only danger is from: ({\em i}) $b_{i',j'}$ and $b'_{i',j'}$,
  where $i'= i$ and $j' = i+3$ (with $i'$ even); ({\em ii})
  $b_{i',j'}$ and $b'_{i',j'}$, where $i'= i$ and $j'+2 = i+3$; ({\em
    iii}) $c_{i',j'}$ and $c'_{i',j'}$, where $i'= i$ and $j' = i+3$
  (with $i'$ even).  However, the conditions in the middle column of
  Table~\ref{table:model} rule these combinations of indices out.
\item[\eqref{pq++4}:] The elements $c_{i,j}$ and $c_{i',j'}$ both
  satisfy the atoms $p_i$ and $q_j$. Notice that the conditions on the
  indices in~\eqref{pq++4} are matched by the relevant conditions in the middle
  column of Table~\ref{table:model}, so that all formulas are accounted for.
\item[\eqref{pq+-1}:] It is readily checked that every element satisfying $p_i$
  also satisfies $q_i$.
\item[\eqref{pq+-2}:] It is readily checked that every element satisfying $p_i$
  also satisfies $q_{i+2}$.
\item[\eqref{pq+-3}:] The elements $c_{i,i}$ and $c'_{i,i}$ both
  satisfy the literals $p_i$ and $\bar{q}_j$, as long as $j \neq i$ and 
  $j \neq i+2$. Note that these elements exist for all $i$ ($0 \leq i \leq 2n-1$),
  so all the relevant formulas are accounted for.
\item[\eqref{pq-+1}:] The elements $d_{j,j}$ and $d'_{j,j}$ both
  satisfy the literals $\bar{p}_i$ and $q_j$. Note that these elements
  exist for all $j$ ($0 \leq i \leq 2n+1$), so all the relevant
  formulas are accounted for.
\item[\eqref{qq--1}:] Both $e$ and $e'$ satisfy the literals
  $\bar{p}_i$ and $\bar{q}_j$.
\end{description}

\end{proof}

The key step in the proof is to show that, for any finite set of sound
syllogistic rules, we can make $n$ sufficiently large that those rules
cannot be used to infer anything new from $\Gamma^n$.  For suppose $\sX$
is a finite set of sound rules in $\cSd_1$. Let $r$ be the maximum
number of premises in any rule in $\sX$, and let $n \geq r+4$.
\begin{claim}
If $\Gamma^n \vdash_\sX \theta$, then $\theta \in \Gamma_n$.
\label{claim:stable}
\end{claim}
\begin{proof}
Consider any derivation establishing that $\Gamma^n \vdash_\sX
\theta$. By Lemma~\ref{lma:substitute}, we may assume that that
derivation features only atoms in $\bP^n$.  We show by induction on
the number of steps (proof-rule instances) in the derivation that
$\theta \in \Gamma^n$.  If there are no steps, then $\theta \in
\Gamma_n$ by definition. Otherwise, consider the last proof rule
instance, and let its antecedents be $\Theta$.  Thus, $|\Theta| \leq
n-4$; moreover, since the elements of $\Theta$ have shorter
derivations than $\theta$, $\Theta \subseteq \Gamma^n$, by inductive
hypothesis. Now consider the formulas
\begin{equation*}
  \mZ(p_{2h-1},p_{2h}), \quad \mZ(p_{2h-2},p_{2h+1}), \quad \leO(q_{2h},q_{2h+1}),
\end{equation*}
where $1 \leq h \leq n-2$; and let us arrange these formulas on a
rectangular grid, thus:
\begin{center}
\begin{tabular}{|c|c|c|c|}
$h=1$ & $h=2$ & \hspace{1cm} $\cdots$ \hspace{1cm} & $h=n-2$\\
\hline
$\mZ(p_1, p_2)$ & $\mZ(p_3, p_4)$ & $\cdots$ & $\mZ(p_{2n-5}, p_{2n-4})$\\
\hline
$\mZ(p_0, p_3)$ & $\mZ(p_2, p_5)$ & $\cdots$ & $\mZ(p_{2n-6}, p_{2n-3})$\\
\hline
$\leO(q_2, q_3)$ & $\leO(q_4, q_5)$ & $\cdots$ & $\leO(q_{2n-4}, q_{2n-3})$\\
\hline
\end{tabular}
\end{center}
Since $|\Theta| \leq n-4$, we can find two columns in this grid which
do not intersect $\Theta$: in other words, there exist two values of
$h$ ($1 \leq h \leq n-2$) such that $\Theta \subseteq \Gamma^n_h$. For
these values of $h$, $\fB^n_h \models \Theta$ by
Claim~\ref{claim:satisfies}, and hence $\fB^n_h \models \theta$, by
the supposed soundness of the rules in $\sX$. By the completeness of
$\Gamma^n_h$, $\theta \in \Gamma^n_h$, whence $\theta \in \Gamma^n$,
by Claim~\ref{claim:intersect}.
\end{proof}
We now complete the proof of the theorem.  Pick any absurdity
$\bot$. By Claim~\ref{claim:complete}, $\bot \not \in \Gamma^n$, and
so, by Claim~\ref{claim:stable}, $\Gamma^n \not \vdash_\sX \bot$. By
Lemma~\ref{lma:completeRAA}, $\Gamma^n \not \Vdash_\sX \bot$. Yet,
$\Gamma^n \models \bot$, by Claim~\ref{claim:unsatisfiable}. Thus,
$\Vdash_{\sX}$ is not complete, as required.

For $\cS_1$, simply delete from $\Gamma^n$ and $\Gamma^n_t$ all
formulas not in that language, and define $\fB^n_t$ as before. Since
the formulas~\eqref{pp++1}, \eqref{pp++3}, \eqref{qq++1},
\eqref{pq+-1} and \eqref{pq+-2} featuring in the proof of
Claim~\ref{claim:unsatisfiable} are all in $\cS_1$, and $\Gamma^n$ and
$\Gamma^n_t$ also differ only in respect of $\cS_1$-formulas, the
proof proceeds as for $\cSd_1$.
\end{proof}
\begin{corollary}
For all $z \geq 1$, there is no finite set $\sX$ of syllogistic rules
in either $\cS_z$ or $\cSd_z$ such that $\Vdash_{\mbox{\sX}}$ is sound
and complete.
\label{cor:noSyll}
\end{corollary}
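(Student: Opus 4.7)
The plan is to adapt the proof of Theorem~\ref{theo:noSyll} to general $z \geq 2$ (the case $z = 1$ being Theorem~\ref{theo:noSyll} itself). Given a finite sound rule-set $\sX$ in $\cSd_z$ of maximum rule-size $r$, I choose $n \geq r+4$ and extend the sets $\Gamma^n$ and $\Gamma^n_t$ from the proof of Theorem~\ref{theo:noSyll} to complete $\cSd_z(\bP^n)$-sets $\hat{\Gamma}^n$ and $\hat{\Gamma}^n_t$ by adjoining, for every literal pair $(\ell,m)$ and every $i$ with $2 \leq i \leq z$, the formula $\exists_{\leq i}(\ell,m)$ whenever the underlying set contains $\leO(\ell,m)$, and $\exists_{> i}(\ell,m)$ whenever it contains $\mO(\ell,m)$. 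Correspondingly, I enlarge the model $\fB^n_t$ of Table~\ref{table:model} to a model $\hat{\fB}^n_t$ in two steps: replace each generic element $b_{i,j}, b'_{i,j}, c_{i,j}, c'_{i,j}, d_{i,j}, d'_{i,j}, e, e'$ by $z+1$ identical copies carrying the same atoms (leaving $a$ and $a'$ singletons so that all $\exO$ and $\leZ$ constraints remain in force), and adjoin $z$ fresh elements whose only satisfied atoms are $q_{2t}$ and $q_{2t+1}$. The duplication raises above $z$ the count of every $\mO^*$-pair of $\Gamma^n_t$ that is witnessed in $\fB^n_t$ by some generic element, while the fresh elements do the same for the remaining $\mO$-pair, $(q_{2t}, q_{2t+1})$, whose $\fB^n_t$-witnesses are only $a$ and $a'$.

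The principal obstacle is the analog of Claim~\ref{claim:satisfies}---that $\hat{\fB}^n_t \models \hat{\Gamma}^n_t$---which requires revisiting each clause of the original case-analysis. The two key observations are that the index conditions in Table~\ref{table:model} have already been engineered so that no generic element witnesses any $\exO$ or $\leZ$ pair of $\Gamma^n_t$ (so the duplication is harmless), and that the fresh $q_{2t}, q_{2t+1}$-only elements, carrying no $p$-atoms and no $q_j$ with $j \notin \{2t, 2t+1\}$, threaten no $\exO(q_i, q_{i+1})$-constraint of $\Gamma^n_t$ except the one at $i = 2t$---which is precisely the one $\Gamma^n_t$ has replaced by $\mO(q_{2t}, q_{2t+1})$. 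The analog of Claim~\ref{claim:intersect} then follows by noting that any literal pair on which $\hat{\fB}^n_t$ and $\hat{\fB}^n_{t'}$ can disagree within $\{0, 1, \ldots, z\}$ must belong to one of the three families $(p_{2h-1}, p_{2h})$, $(p_{2h-2}, p_{2h+1})$, $(q_{2h}, q_{2h+1})$ with $h \in \{t, t'\}$; for every other pair, both models record the same count---either $0$, $1$, or strictly greater than $z$---so the corresponding $\cSd_z$-formulas agree, forcing $\hat{\Gamma}^n_t \cap \hat{\Gamma}^n_{t'} \subseteq \hat{\Gamma}^n$.

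With these ingredients, the grid argument of Claim~\ref{claim:stable} carries over almost verbatim: each column now hosts $2z+4$ flippable formulas in place of three, but every such formula still lies in exactly one column, so any antecedent set $\Theta$ with $|\Theta| \leq n-4$ misses at least two columns $h, h'$, and the inductive step delivers $\theta \in \hat{\Gamma}^n_h \cap \hat{\Gamma}^n_{h'} \subseteq \hat{\Gamma}^n$ by soundness of $\sX$ and completeness of the $\hat{\Gamma}^n_h$. Picking any absurdity $\bot$, the analog of Claim~\ref{claim:complete} gives $\bot \notin \hat{\Gamma}^n$, whence $\hat{\Gamma}^n \not\vdash_\sX \bot$; Lemma~\ref{lma:completeRAA} converts this into $\hat{\Gamma}^n \not\Vdash_\sX \bot$, and a fortiori $\Gamma^n \not\Vdash_\sX \bot$, contradicting the supposed completeness of $\Vdash_\sX$ since $\Gamma^n \models \bot$ by Claim~\ref{claim:unsatisfiable}. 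The case of $\cS_z$ follows by discarding from $\hat{\Gamma}^n$ and $\hat{\Gamma}^n_t$ every formula outside $\cS_z$, exactly as in the final paragraph of the proof of Theorem~\ref{theo:noSyll}.
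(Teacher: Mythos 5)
Your proposal is correct and follows essentially the same route as the paper's own (very terse) proof: complete $\Gamma^n$ and $\Gamma^n_t$ to $\cSd_z$-sets, multiply the generic witnesses of $\fB^n_t$ up to $z+1$ copies while leaving $a$ and $a'$ alone, and rerun the grid argument with Lemma~\ref{lma:completeRAA}. The only divergence is on the pair $(q_{2t},q_{2t+1})$ in $\Gamma^n_t$: the paper leaves the original replacement untouched, so that pair carries $\exists^*_{=2}(q_{2t},q_{2t+1})$ and $a,a'$ alone suffice as witnesses, whereas you complete it to $\exists^*_{>z}(q_{2t},q_{2t+1})$ and compensate with $z$ fresh elements satisfying only $q_{2t}$ and $q_{2t+1}$ --- both choices preserve completeness, satisfiability and the intersection property, so either works.
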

\begin{proof}
For $z >1$, we make the following changes to the proof of
Theorem~\ref{theo:noSyll}. ({\em i}) In the definition of $\Gamma^n$,
all occurrences of $\mO^*$ are replaced by $\exists^*_{> z}$; and all
occurrences of $\exO$ are replaced by $\exists^*_{= 1}$.  ({\em ii})
The definition of $\Gamma^n_t$ in terms of $\Gamma^n$ is unaffected,
namely:
\begin{multline*}
\Gamma^n_t = (\Gamma^n \setminus 
              \set{\mZ(p_{2t-1},p_{2t}), 
                   \mZ(p_{2t-2},p_{2t+1}), 
                   \leO(q_{2t},q_{2t+1})}) \cup \\
             \set{\leZ(p_{2t-1},p_{2t}), 
                  \leZ(p_{2t-2},p_{2t+1}), 
                  \mO(q_{2t},q_{2t+1})}. 
\end{multline*}
Thus, $\Gamma^n_t$ contains (more precisely: {\em includes})
$\leZ^*(p_{2t-1},p_{2t})$, $\leZ^*(p_{2t-2},p_{2t+1})$ and
$\exists^*_{= 2}(q_{2t},q_{2t+1})$. ({\em iii}) In the definition of
$\fB^n_t$, instead of taking just two elements, $b_{i,j}$ and
$b'_{i,j}$, we instead take $(z+1)$ elements, $b_{i,j}, b'_{i,j}, \ldots,
b^{\prime \dots \prime}_{i,j}$; and similarly with $c_{i,j}$ and
$d_{i,j}$. The argument then proceeds exactly as for
Theorem~\ref{theo:noSyll}.
\end{proof}

\bibliographystyle{plainnat} 
\bibliography{s1}
\end{document}